\newcommand\red[1]{{\color{red} #1 }}
\newcommand\ar[1]{\vec{#1}}
\declaretheorem{assumption}
\declaretheorem{algorithm}
\declaretheorem{proposition}
\declaretheorem{theorem}
\newcommand{\CC}{\ensuremath{\mathbb{C}}}
\newcommand{\defn}[1]{\emph{#1}}
\DeclareMathOperator{\EN}{\mathbf{E}}
\DeclareMathOperator{\Bin}{Bin}
\DeclareMathOperator{\sol}{sol}
\DeclareMathOperator{\supp}{supp}
\DeclareMathOperator{\dEN}{\Delta}
\DeclareMathOperator{\EE}{\mathbb{E}}
\DeclareMathOperator{\tail}{src}
\DeclareMathOperator{\head}{dest}
\DeclareMathOperator{\edge}{edge}
\def\longleftrightarrowfill@{\arrowfill@\leftarrow\relbar\rightarrow}
\tikzset{
  text style/.style={
    sloped, 
    text=black
  }
}
\newcommand*\samethanks[1][\value{footnote}]{\footnotemark[#1]}
\begin{document}
\title{Monodromy Solver: Sequential and Parallel\thanks{Research of TD and AL is supported in part by NSF grant DMS-1719968.}}
\author{Nathan Bliss\thanks{University of Illinois at Chicago}   \hspace{7mm}  Timothy Duff\thanks{Georgia Tech}\\
Anton Leykin\samethanks \hspace{7mm} Jeff Sommars\samethanks[2]}

\maketitle

\begin{abstract} 
We describe, study, and experiment with an algorithm for finding all solutions of systems of polynomial equations using homotopy continuation and monodromy. 
This algorithm follows the framework developed in~\cite{firstmonodromypaper}
and can operate 
in the presence of a large number of failures of
the homotopy continuation subroutine.

We give special attention to parallelization and probabilistic analysis of a model adapted to parallelization and failures. Apart from theoretical results, we developed a simulator that allows us to run a large number of experiments without recomputing the outcomes of the continuation subroutine.
\end{abstract}


\maketitle

\section{Introduction}
Monodromy Solver (MS) is an algorithmic framework for solving parametric families of polynomial systems.  MS relies on numerical homotopy continuation methods \cite{Morgan87}, which are applicable in a very general setting, and monodromy (Galois group) action, which is specific to polynomial systems. The monodromy technique has been used successfully in \emph{numerical algebraic geometry} (for a good overview, see \cite{Sommese-Wampler-book-05}) mostly for high level tasks: for instance, numerical irreducible decomposition~\cite{SVW2001:monodromy} or Galois group computation~\cite{Leykin-Sottile:HoG,hauenstein2016numerical}.

\bigskip

The MS framework addresses the following basic problem:
\begin{quote}
Given a family of polynomial maps $F_p$, find \emph{all} solutions to $F_p=0$ for a generic value of $p$.
\end{quote}

\noindent Note that, given an ability to construct a complete solution set for a generic
value of the parameter, one can find all isolated solutions $F_p=0$ for an \emph{arbitrary} value of the parameter $p$ by using a coefficient-parameter homotopy~\cite[\S
7]{Sommese-Wampler-book-05}.

Apart from MS,  current methods of polynomial system solving via homotopy continuation include polyhedral approaches
\cite{HuberSturmfels:PolyhedralHomotopies, Verschelde-Verlinden-Cools},
total-degree and multihomo-geneous-degree homotopies \cite{WAMPLER1992143},
regeneration \cite{hauenstein2011regeneration}, and various other more specialized methods.

Most methods of homotopy continuation are embarrassingly parallel, in that homotopy paths
can be tracked completely independently. Literature on parallelism in relation to homotopy continuation includes \cite{Gunji2006,
Harimoto:1987:GHA:645818.669226, 1488691, Leykin:2007:CMV:1278177.1278195,
Ley-Ver-new-monodromy-05, Leykin2006, MORGAN19891339, 1690736}.

While an \defn{atomic task} of MS (one homotopy path track) is independent of another such task, this is only true for the tasks that are \emph{already} scheduled and being processed.

The scheduling algorithm, however, follows a probabilistic framework and (at every state when resources free up) attempts to find a task that maximizes the number of solutions known once this task and the current (in progress) tasks are complete. This has to be done using only partial knowledge of the outcome of the current tasks and, hence, implies a \emph{dependence} of the choice of a new candidate task on the current state of the algorithm.

In the context of the framework that allows multiple threads to carry out atomic tasks, we analyze the probabilistic model that results from the assumption of uniform randomness of correspondences induced by edges in an underlying graph (see \S\ref{basicMonodromy}). This is followed by analysis of a model that accounts for failures in the homotopy continuation subroutine. 

Last, but not least, we have implemented a simulator for the new algorithm that makes it possible to run fast experiments without rerunning the actual continuation subroutine over and over again. Using this simulator we conduct several computational experiments on both fabricated data using the probability distribution in our model and the data coming from the execution of homotopy continuation algorithms for a family of polynomial systems. This contribution is important for the further development of the MS framework, since our probabilistic assumptions are too simple to completely describe the random behavior in actual computations.

In \S\ref{basicMonodromy} we give a primer on MS using an example, and then define necessary terminology in \S\ref{definitions}. The pseudocode for the main algorithm appears in \S\ref{alg}.  A probabilistic model is analyzed in \S\ref{sec:task-selection} with a view towards designing a task selection strategy for our algorithm. The study of the threshold for completion depending on the rate of failures is in \S\ref{sec:threshold}. 
Finally, in~\S\ref{experiments}, we describe the implementation of the simulator, and use it to showcase the benefits of the new approach via several experiments.
A brief conclusion is in \S\ref{sec:conclusion}.


\section{Monodromy Solver framework}\label{basicMonodromy}

For a family $F_p$, the MS approach treats different parameter choices $p_i$ as nodes in a graph, and by tracking along ``edges'' (i.e.\ homotopies) between them, seeks to populate the solution set for at least one node. Each of these homotopies is a \emph{coefficient-parameter homotopy},
\begin{equation}
H(t)=F_{(1+t)p_1+tp_2}, ~~~t\in[0,1],
\end{equation}
which tracks between the parameter choices $p_1$ and $p_2$. For generic $p$, the
number of roots of the system is constant, and following loops in the
graph permutes the roots. In the case when $F_p$ is linear in $p$, we may use a \emph{segment homotopy},
\begin{equation}\label{eq:homotopy}
H(t)=(1-t)\gamma_1F_{p_1} + t\gamma_2F_{p_2}, ~~~t\in[0,1],
\end{equation}
defined for generic $\gamma_1,\gamma_2\in\CC$. This gives us the ability to introduce multiple edges between two nodes, in hope that they would induce distinct maps on the solution sets.

As an example, suppose we want to know the roots of a generic univariate cubic
polynomial. Writing it as 
\begin{equation}
x^3 + ax^2 + bx + c,
\end{equation}
we set up a graph for three values $p_1,p_2,p_3\in\mathbb{C}^3$ of 
the coefficients.  It may help to visualize the family with one parameter: set $a=b=0$. Then we may just imagine a triangle embedded in the complex plane, i.e., the parameter space of $c$. This triangle is the \emph{homotopy graph} of Figure~\ref{fig:basicMonodromy} that lifts to the \emph{solution graph} above it.

\begin{figure}[H]
\centering
\begin{tabular}{lccccr}
\begin{minipage}{0.35\textwidth} 
  \begin{tikzpicture}[auto, every node/.style={circle,draw},
    every fit/.style={ellipse,draw,inner sep=-2pt,text width=1.75cm}]
    \node (a0) [fill=gray]  {};
    \node (a1) [below of=a0,fill=white] {};
    \node (a2) [below of=a1,fill=white] {};

    \node (b0) [xshift=0.45\textwidth,fill=white]  {};
    \node (b1) [below of=b0,fill=white] {};
    \node (b2) [below of=b1,fill=white] {};

    \node (c0) [xshift=0.2\textwidth,yshift=-0.25\textwidth,fill=white]    {};
    \node (c1) [below of=c0,fill=white] {};
    \node (c2) [below of=c1,fill=white] {};

    \draw[dashed] (a0) -- (b1); \draw[dashed] (a1) -- (b0); \draw[dashed] (a2) -- (b2);
    \draw[dashed] (c0) -- (b0); \draw[dashed] (c1) -- (b2); \draw[dashed] (c2) -- (b1);
    \draw[dashed] (a0) -- (c0); \draw[dashed] (a1) -- (c1); \draw[dashed] (a2) -- (c2);

    \newlength{\downshift}
    \setlength{\downshift}{-0.7\textwidth}
    \node (a) [yshift=\downshift,fill=white,label=below:{$p_1$}]  {};
    \node (b) [xshift=0.45\textwidth,yshift=\downshift,fill=white,label=below:{$p_2$}]  {};
    \node (c) [xshift=0.2\textwidth,yshift=\downshift-0.2\textwidth,fill=white,label=below:{$p_3$}]    {};
    \draw (a) -- (b);
    \draw (b) -- (c);
    \draw (a) -- (c);
  \end{tikzpicture}
\end{minipage}

&
&
&
&
&

\begin{minipage}{0.2\textwidth}
\hspace{-2.8cm}\includegraphics[width=5cm]{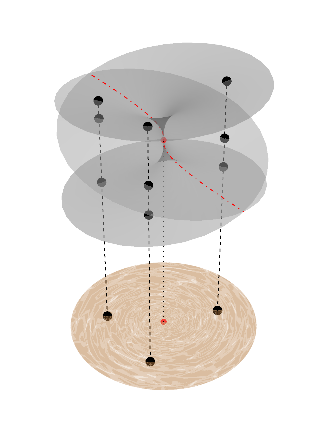}
\end{minipage}
\end{tabular}
\caption{The homotopy graph (left) and the solution graph (right)
viewed as a restriction of the 3-to-1 covering for $x^3 + c=0$.}
\label{fig:basicMonodromy}

\end{figure}

Assume that we know one solution (shaded in Figure~\ref{fig:basicMonodromy}) of $F_{p_1}$. Now continue this solution along edges of the solution graph. By doing so, we recover \emph{all three} solutions at one of the nodes. 
As long as the action of the \defn{monodromy group} (see \cite{firstmonodromypaper} for definition and discussion) is transitive, it is always possible to recover all solutions from one by following along the edges of a graph that is sufficiently large and sufficiently general.

In our very simple example there is always a unique choice for the next edge to track along; in general, this will not be so. The fact that only a subgraph of the solution graph is known at any point of the algorithm complicates the selection of the next (homotopy continuation) atomic task further. Hence, two parts of the MS approach are probabilistic: first, the homotopy graph is created at random; and second, the task selection procedure may either be random or designed to maximize the expectation of some \emph{potential} function (see \S\ref{sec:task-selection}) under a fixed probabilistic model. 

It has been shown in \cite{firstmonodromypaper} that given a simple probabilistic model, the expected number of edges in the solution graph for MS to succeed is \emph{linear} in the number of solutions. This bounds the number of continuation tasks to be carried out and---what seems to be the main reason for the practical success of MS---ties the overall complexity of the approach to the \emph{actual} number of the solutions, and not to some \emph{bound} that may be available a priori for a larger family of systems (e.g., bound  of B\'ezout or Bernstein-Khovanskii-Kouchnirenko).

Note that the Monodromy Solver framework does not specify a stopping criterion. For the discussion of possible stopping criteria see \S 3.2.2 and \S 3.2.3 of~\cite{firstmonodromypaper}.


\section{Definitions}\label{definitions}
Let $G$ be a loopless multigraph with vertices $V=V(G)$ and edges $E = E(G)$.
Each vertex corresponds to a system $F_p$ specialized at parameters $p$ and is
associated with $d$ solutions---we refer to the vertex together with this
satellite data as a \emph{node.}  Each edge $e \in E$ connecting nodes $v_1$ and $v_2$ induces
a homotopy that establishes a bijective correspondence between the solutions of
the polynomial systems $F_{p_1}$ and $F_{p_2}$. We assume the following:
\begin{assumption}\label{randomAssumption}
Edges induce uniformly random correspondences.
\end{assumption}
\noindent In other words, we assume that all bijective correspondences that
could be induced by an edge connecting the solutions of $F_{p_1}$ and $F_{p_2}$
are equally likely.
This assumption allows us to simplify the probability calculations and postulate an effective task selection strategy described in \S\ref{sec:task-selection} especially when tracking multiple paths in parallel. See discussion of randomization in \S 5.1 of~\cite{firstmonodromypaper}.

In general, $e$ will refer to an edge and $\ar{e}$ will refer to a directed
edge (a pair of $e$ and a specified direction). A pair $t=(s,\ar e)$,
where $s\in S(v)$ belongs to the solution set $S(v)$ of the polynomial system
corresponding to $v = \tail{(\ar e)}$, represents a candidate for (one)
\emph{homotopy path track},  an atomic task that shall be performed by one
thread in a parallel algorithm.  

We fix the graph $G=(V,E)$ at the initialization stage. At a given \defn{state}
$x=(Q,C,A)$ of the algorithm we have the following.

\begin{itemize} 

\item A collection $Q$ of sets indexed by $v\in V$, where each $Q_v$ is
the subset of solutions at $v$ \defn{known} at this state.

\item A collection $C$ of sets indexed by $e \in E$. Each $C_e
\subset Q_v\times Q_w$ --- where $v$ and $w$ are the nodes $e$ connects --- 
is a partial one-to-one correspondence between subsets of
$Q_{v}$ and $Q_{w}$. We denote by $\pi_v$ and $\pi_w$ the projections from
$Q_v\times Q_w$ to $Q_{v}$ and $Q_{w}$, respectively.

\item A set $A = \{t_1,\dots,t_k\}$ of atomic (homotopy path tracking) tasks
currently being processed (using $k$ independent threads).

\end{itemize}
Given a state $x = (Q,C,A)$ we denote $Q(x):=Q$, $C(x):=C$, and $A(x):=A$. Note that in most states (in our basic framework, in all states but the initial state) one can determine $Q$ from $C$. We shall call a state $x$ \defn{idle} if $A(x) = \emptyset$. 
 
For an atomic task $t=(s,\ar e)\in A$, 
$\tail(t)$ and $\head(t)$ will refer to the source and destination vertices of 
$\edge(t) := \ar e$.

Prior to running $t,$ it is unknown which solution at $\head (t)$ will be found. We use the random variable $\sol_t$ to denote the outcome of running this task, conditioned on the current state. Likewise, $\sol_A$ will denote the random set of solutions known after running the tasks in $A$.
Suppose we know (or can estimate) the solution count for a generic system; refer to this (integer) count as $d$, the \defn{degree} of the problem.   Assumption~\ref{randomAssumption} implies that

\begin{equation}\label{singleThreadEV}
\Pr (\sol_t \notin Q_{\head (\ar e)})  = \frac{d-|Q_{\head(\ar e)}|}{d-|C_{e}|}
\end{equation}

Define $\EN(x)$ to be the expected  total number
of known solutions at all vertices after running \emph{all} tasks $t\in A$ to
completion. That is, if $y$ is the state after the completion, i.e., $A(y) = \emptyset$, 
then 
$$ \EN(x) = \sum_{v\in V(G)} \EN_v(x),\text{ with } \EN_v(x) :=
\EE(|Q_v(y)|), $$ where the (new) number of known points $|Q_v(y)|$ is perceived as
a random variable with expected value $\EE(|Q_v(y)|)$; state transition
probabilities are induced by Assumption~\ref{randomAssumption}.


\section{Task selection via potential}\label{sec:task-selection}
We intend to use either $\EN(Q,C,A \cup\{t\})$ to define a \emph{potential}
function driving our choice of the next task $t$ to append to $A$ once a
thread becomes available. The basic update rule is given below:
\begin{equation}
  \label{eq:increment}
  \EN_v (Q,C,A\cup \{t\}) = \EN_v (Q,C,A) + \Pr (\sol_t \notin \sol_A).
\end{equation}
This follows by a simple conditioning argument:
\begin{align*}
  \EN_v (Q,C,A\cup \{t\}) &=\EN_v (Q,C,A\cup \{t\} \mid \sol_t \in \sol_A) \Pr (\sol_t \in \sol_A)\\
&+ \EN_v (Q,C,A\cup \{t\} \mid \sol_t \notin \sol_A) \Pr (\sol_t \notin \sol_A) \\[0.3em]
  &=  \EN_v (Q,C,A) \Pr (\sol_t \in \sol_A) \\
  &+ \Big(1+\EN_v (Q,C,A)\Big) \Pr (\sol_t \notin \sol_A)\\[0.3em]
  &=  \EN_v (Q,C,A) + \Pr (\sol_t \notin \sol_A)\\
\end{align*}

\subsection{Potential given no path failures}
Since random homotopy paths stay away from the discriminant locus with probability $1$, it is natural to seek a ``smart'' task-selection strategy in the idealized setting when \emph{no failures} in homotopy tracking occur. 
The following proposition shows that $\EN_v(Q,C,A)$ can be computed recursively. 
\begin{proposition}\label{prop2}
  Let $v$ be a
vertex and $e$ an edge incident to $v$.  If $t$ is a candidate path track with $\head(t)=v$ and
$\edge(t)=\ar e$, then
\begin{eqnarray}\label{eq:potential}
  \begin{aligned}
    \EN_v(Q,C,A\cup \{t\}) & =  \EN_v(Q,C,A)\\
    & + \frac{d-\EN_v(Q,C,A)}{d-|C_e|-\#\{t' \in A : \edge(t')=\ar e\}}
  \end{aligned}
\end{eqnarray}
\end{proposition}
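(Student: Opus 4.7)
The plan is to apply the update rule~\eqref{eq:increment} and compute $\Pr(\sol_t \notin \sol_A)$ in closed form. Since $\head(t)=v$, the outcome $\sol_t$ lives in $S(v)$, so the event $\sol_t\in\sol_A$ is equivalent to $\sol_t\in Q_v(y)$, where $y$ is the idle state reached after all tasks in $A$ complete. Writing $k=\#\{t'\in A:\edge(t')=\ar e\}$, it thus suffices to show
\[
\Pr(\sol_t\notin Q_v(y)) \;=\; \frac{d-\EN_v(Q,C,A)}{d-|C_e|-k},
\]
as substituting this into~\eqref{eq:increment} yields the claim.

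The key step is to describe the conditional law of $\sol_t$. Let $C_e'$ denote the correspondence on $e$ after the $k$ tasks in $A$ with directed edge $\ar e$ have finished; assuming these tasks are well-posed (distinct starting solutions, none lying in $\pi_{\tail(\ar e)}(C_e)$), we have $|C_e'|=|C_e|+k$. Assumption~\ref{randomAssumption} implies that the full correspondence on $e$ is a uniform extension of $C_e$, so conditional on $C_e'$, the remaining correspondence is uniform on bijections between the two sets of $d-|C_e'|$ unpaired solutions. Combining this with independence of correspondences on distinct edges shows that, conditional on the outcomes of all tasks in $A$ (which in particular determine both $C_e'$ and $Q_v(y)$), the solution $\sol_t$ is uniformly distributed on $S(v)\setminus\pi_v(C_e')$, a set of size $d-|C_e|-k$.

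Next I would verify that $\pi_v(C_e')\subseteq Q_v(y)$: the ``old'' part $\pi_v(C_e)$ lies in $Q_v\subseteq Q_v(y)$ because recorded pairings are always between known solutions, while each new pairing in $C_e'\setminus C_e$ comes from a task $t'\in A$ with $\head(t')=v$, whose image $\sol_{t'}$ belongs to $Q_v(y)$ by construction. This inclusion makes the ``bad'' subset of the support of $\sol_t$ collapse to $Q_v(y)\setminus\pi_v(C_e')$ of size $|Q_v(y)|-|C_e'|$, so
\[
\Pr\bigl(\sol_t\notin Q_v(y)\,\big|\,\text{outcomes of }A\bigr) \;=\; \frac{d-|Q_v(y)|}{d-|C_e|-k}.
\]
Taking expectation over the outcomes of $A$ removes the conditioning and turns $\EE|Q_v(y)|$ into $\EN_v(Q,C,A)$, completing the proof. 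The main subtlety is the conditional-distribution step, which simultaneously invokes uniformity of the bijection on $e$ and independence across edges; once these are in place, the inclusion $\pi_v(C_e')\subseteq Q_v(y)$ does the remaining bookkeeping.
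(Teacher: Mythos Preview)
Your proof is correct and takes essentially the same approach as the paper: apply the update rule~\eqref{eq:increment}, condition on the outcomes of the tasks in $A$ to get the conditional probability $(d-|Q_v(y)|)/(d-|C_e|-k)$, and average. The paper compresses this into three lines by conditioning directly on the value $X=|Q_v(y)|$ and asserting the conditional-probability formula, whereas you spell out the supporting details (uniformity of $\sol_t$ on $S(v)\setminus\pi_v(C_e')$ and the inclusion $\pi_v(C_e')\subseteq Q_v(y)$) that the paper leaves implicit in its appeal to Assumption~\ref{randomAssumption}.
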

Thus, if we keep track of these expectations as we go, we may determine the potential of tracking a new thread without recomputing anything else.
\begin{proof}
Let $X$ denote the random variable that, conditioned on the idle state $(Q,C,\emptyset),$ counts the total number of solutions at $v$ after completing all tasks in $A.$ Noting equation~\eqref{eq:increment}, we have
\begin{eqnarray*}
  &&\Pr (\sol_t \notin \sol_A) =  \\
  &&=   \sum_{k\in \supp X} \displaystyle\frac{(d-k)}{d-|C_e|- \#\{t' \in A : \edge(t')=\ar e\}} \Pr (X =k) \\
  &&= \frac{d-\EN_v(Q,C,A)}{d-|C_e|-\#\{t' \in A : \edge(t')=\ar e\}}.
\end{eqnarray*}
\end{proof}

\subsection{Potential in the presence of failures}\label{sec:failures}
The failure of certain atomic tasks is an inevitable feature of any MS implementation: such failures may occur when paths verge too close to the locus of singular systems, and may be influenced by the aggressiveness of threshold settings in the underlying numerical software as well as various others factors. In anticipation of such failures, we consider the effects of failures in a simple probabilistic model generalizing the results of the previous section.

\begin{assumption}\label{failureAssumption}
We now assume that the \defn{probability of success} for every atomic task equals a global fixed constant $\alpha\in[0,1]$ and that formation of edge correspondences and all task failures are mutually independent events.
\end{assumption}

Let us emphasize a technical feature of this assumption---if we have $\edge (t) = \edge (t')$ and $\head (t) = \tail (t'),$ the tasks $t$ and $t'$ still fail independently. This lack of symmetry should be accounted for in any given state of the algorithm. Thus, we extend our definition of a \emph{state} $x=(Q,C,A,F)$ as follows:

\begin{itemize}

\item As before, $Q_v$ denotes the set of solutions known at $v\in V,$ each $C_{vw} \subset Q_v \times Q_w$ is a set of \emph{known, successful correspondences,} and $A$ is the set of current tasks.

\item Failures are indexed by \emph{directed edges.} For each $\ar e$, the set $F_{\ar e}$ consists of known solutions $s \in Q_{\tail(e)}$ such that the task $(s,\ar e)$ has completed with a failure. For $\alpha =1,$ we have $F_{\ar e}$ empty for all $\ar e$ and hence abbreviate $x=(Q,C,A).$
\end{itemize}
\begin{proposition}\label{prop_fail_pot}
With notation as in Proposition~\ref{prop2},
  \begin{eqnarray*}\label{eq:potentialfail}
    &&\EN_v(Q,C,A\cup \{t\},F)  =  \EN_v(Q,C,A,F) \, + \alpha \times  \\
    && \frac{\Big( d-\EN_v(Q,C,A,F)\Big) \Big(  1 - \EE \, \frac{\# F_{\edge (t)} + B}{d- \# C_e - \cdot \# \{ t' \in A \mid \edge (t') = \ar e \} + B} \Big)}{d - \# C_e - \# F_{\edge (t)} -  \# \{ t' \in A \mid \edge (t') = \ar e \}  }.
  \end{eqnarray*}
where $B$ a random variable with a binomial distribution: $B\sim \Bin (\#\{t' \in A : \edge(t')=\ar e\}, 1-\alpha ).$
\end{proposition}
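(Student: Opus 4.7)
\emph{Proof plan.} The plan is to extend the argument of Proposition~\ref{prop2} by introducing a new layer of conditioning on the random number $B$ of failures among the $m := \#\{t' \in A : \edge(t') = \ar e\}$ scheduled tasks along $\ar e$. By Assumption~\ref{failureAssumption}, $B \sim \Bin(m,1-\alpha)$, and the task success/failure events are independent of the uniform bijection induced by $e$.

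First, I would derive a failure-aware generalization of the increment formula. Because task $t$ contributes a new known solution at $v$ precisely when it succeeds---an independent $\alpha$-Bernoulli event---the same three-way conditioning argument ($t$ fails; $t$ succeeds but its target is not new; $t$ succeeds and is new) that underlies~\eqref{eq:increment} yields
\[
\EN_v(Q,C,A\cup\{t\},F) \;=\; \EN_v(Q,C,A,F) \;+\; \alpha\cdot\Pr\bigl(\sol_t \notin \sol_A \,\big|\, t\text{ succeeds}\bigr).
\]

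Next, I would evaluate this conditional probability by further conditioning on $B=b$. Along the edge $e$, $|C_e|$ correspondences are known a priori, $m-b$ more are revealed by the successes in $A$, while the $|F_{\ar e}|+b$ failed sources (old plus current) have occupied yet further targets whose images remain hidden. By bijectivity of $\sigma_e$ the outcome $\sol_t$ cannot coincide with any hidden-image slot, and applying the uniform-random-bijection assumption to the residual unassigned sources and targets shows that $\sol_t$ is distributed uniformly over the $d-|C_e|-m-|F_{\ar e}|$ truly available slots, viewed as a subset of the size-$(d-|C_e|-m+b)$ pool of unrevealed targets. The conditional probability then factors into two pieces: (i) landing in the truly available sub-pool, contributing $\frac{d-|C_e|-m-|F_{\ar e}|}{d-|C_e|-m+b}$; and (ii) given that, being an as-yet-unknown solution at $v$, which by summing over $k=|Q_v(y)|$ exactly as in the proof of Proposition~\ref{prop2} equals $\frac{d-\EN_v(Q,C,A,F)}{d-|C_e|-|F_{\ar e}|-m}$. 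Averaging the first factor against the $\Bin(m,1-\alpha)$ law of $B$ produces the bracketed expression $1-\EE\,\frac{|F_{\ar e}|+B}{d-|C_e|-m+B}$ and hence the claimed formula.

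The main technical obstacle is the coupling of $|Q_v(y)|$ with $B$: the tasks in $A$ along $\ar e$ contribute to $|Q_v(y)|$ exactly when they do \emph{not} contribute to $B$, so a naive interchange of expectations would misrepresent the probability. The way around this is to perform the Proposition~\ref{prop2}-style sum over $|Q_v(y)|$ \emph{inside} the conditioning on $B$, using the symmetry of the residual uniform bijection on the truly available pool to absorb the coupling, and only then take the outer expectation against the binomial law of $B$.
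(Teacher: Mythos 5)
Your proof takes essentially the same route as the paper's: condition on the binomial failure count $B$, use the uniform-bijection (hypergeometric) symmetry along $\ar e$ to evaluate $\Pr(\sol_t\notin\sol_A)$, multiply by the success probability $\alpha$, and average over $B$; your two-factor decomposition of the conditional probability is algebraically identical to the paper's computation, which instead subtracts $\EE\,\#(Y\cap X^c)$ for the set $Y$ of hidden images of failed sources. The coupling between $\#Q_v(y)$ and $B$ that you flag is treated no more explicitly in the paper (which simply ``averages over $k$ and then $j$''), so your argument matches the paper's approach and level of rigor.
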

\begin{proof}
Let $u = \tail(\edge(t)).$ We consider the following set-valued random variables whose state spaces are conditional on the idle state $(Q,C,\emptyset , F):$
\begin{itemize}
\item $X$ is the set of all solutions at $v$ which are known after completing all tasks in $A$--hence $\EE [X] = \EN_v (Q,C,A).$
\item $Y$ consists of all solutions at $v$ whose correspondences along $\ar{uv}$ have failed after completing all tasks in $A.$ Thus, the random variable $B:=(\# Y- \# F_{\edge (t)})$ has the desired binomial distribution.
\end{itemize}
Recalling~\eqref{eq:increment}, note that task $t$ yields a solution undiscovered by $A$ with probability $\alpha \cdot \Pr (\sol_t \notin \sol_A ).$  Moreover, we have
\[
\Pr (\sol_t \notin \sol_A ) = \displaystyle\frac{d-\EN_v (Q,C,A,F) - \EE [ \# (Y \cap X^c) ]}{d-\# C_e - \# F_{\edge (t)} - \# \{ t' \in A \mid \edge (t') = \ar e \} }
\]
Conditional on the event $(X=k,B=j),$ we have that
\[
\#Y = \# F_{\edge (t)} + j,
\]
but the intersection $Y \cap X^c$ still depends on
\[
d - (\# C_e +  \# \{ t' \in A \mid \edge (t') = \ar e \} -j)
\]
unknown correspondences. Assumption~\ref{randomAssumption} implies that the conditional distribution may be generated as follows:
\begin{itemize}
\item[1)] For each solution at $u$ which is known to fail along $\ar{uv}$ after completing all tasks in $A,$ the corresponding solutions in $Y$ are drawn uniformly without replacement from the $(d - \# C_e -  \# \{ t' \in A \mid \edge (t') = \ar e \} +j)$ solutions at $v$ without correspondences.
\item[2)] Declare each solution in $Y \cap X^c$ to be a ``success.''
\end{itemize}
Hence the conditional expectation of the number of ``successes'' is given by the mean of a hypergeometric distribution:
\begin{eqnarray}
\EE \left( \# (Y \cap X^c) \mid X=k,B=j \right) =\\
\displaystyle\frac{(\# F_{\edge (t) } + j ) \, (d-k) }{d - \# C_e -  \# \{ t' \in A \mid \edge (t') = \ar e \} +j}
\end{eqnarray}
Averaging over $k$ and then $j$ gives the result.
\end{proof}
In practice, it is also useful to group current tasks together according to their directed edges. This is reflected in the following proposition:
\begin{proposition}
\label{prop:batches}
Let $A \cup A'$ denote the set of tasks, where $A'$ consists of all tasks using the directed edge $\ar{uv}.$ Then
\[
\EE_v (Q,C,A\cup A', F) = \EE_v (Q,C,A,F) + \alpha \, \# A' \, \left(\displaystyle\frac{d-\EE_v (Q,C,A,F)}{d-\#C_e} \right).
\]
\end{proposition}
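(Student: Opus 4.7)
The plan is to compute the increment $\EN_v(Q,C,A\cup A',F)-\EN_v(Q,C,A,F)$ directly as the expected number of \emph{new} solutions at $v$ contributed by the batch $A'$, bypassing Proposition~\ref{prop_fail_pot} entirely by exploiting that, since $A'$ collects every task using $\ar{uv}$, the complement $A$ contains no tasks on the edge $e$ whatsoever. This structural simplification decouples the randomness of the bijection along $\ar{uv}$ from the randomness generated by $A$ and lets linearity of expectation do all the work.

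First I would introduce the bijection $\sigma\colon S(u)\to S(v)$ induced by $\ar{uv}$ together with the random set $X$ of solutions at $v$ known after completing all tasks in $A$ starting from the idle state, so that $\EE|X|=\EN_v(Q,C,A,F)$ and $\pi_v(C_e)\subseteq X$. The key observation is that, since no task in $A$ uses $e$, Assumption~\ref{randomAssumption} makes the restriction of $\sigma$ to the $d-\#C_e$ unknown partners at $u$ a uniform random bijection onto the $d-\#C_e$ unknown partners at $v$, independently of $X$. Assumption~\ref{failureAssumption} further renders each task's success or failure independent of $\sigma$, of $X$, and of every other task in $A'$.

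Next I would consider a single task $t=(s,\ar{uv})\in A'$ (necessarily with $s\notin\pi_u(C_e)$) and compute the probability that it reveals a solution not in $X$. Conditional on $X$, the image $\sigma(s)$ is uniform over the $d-\#C_e$ unknown partners at $v$, of which exactly $|X|-\#C_e$ already lie in $X$; hence $\Pr(\sigma(s)\notin X\mid X)=(d-|X|)/(d-\#C_e)$, and multiplying by the success probability $\alpha$ and averaging over $X$ yields a per-task contribution of $\alpha(d-\EN_v(Q,C,A,F))/(d-\#C_e)$. To finish I would sum over $t\in A'$: because the tasks in $A'$ have distinct sources and $\sigma$ is a bijection, the images $\sigma(s)$ produced by successful tasks are automatically pairwise distinct, so each new solution is counted exactly once and linearity of expectation assembles the per-task contributions into $\alpha\,\#A'\,(d-\EN_v(Q,C,A,F))/(d-\#C_e)$, which is the claimed increment.

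The main delicate point I anticipate is a clean articulation of the independence of $\sigma$ from $X$: it rests squarely on $A$ containing no tasks on $e$, which is precisely what batching into $A'$ achieves and what makes the formula vastly simpler than one obtains by iteratively applying Proposition~\ref{prop_fail_pot}. Once that independence is in hand, the rest is short bookkeeping in which the hypergeometric complications of Proposition~\ref{prop_fail_pot} disappear because the bijective nature of $\sigma$ forces the images of distinct successful tasks in $A'$ to be distinct.
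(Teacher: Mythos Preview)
Your argument is correct and follows essentially the same strategy as the paper's: both write the increment as $\sum_{t\in A'}\Pr\bigl(t\text{ succeeds and }\sol_t\notin\sol_A\bigr)$, observe that the summands are all equal, and evaluate one of them. The paper obtains the per-task value by specializing the formula from Proposition~\ref{prop_fail_pot} (so $\#F_{\ar e}$ appears in an intermediate expression and then cancels algebraically), whereas you argue directly that prior failures along $\ar{uv}$ reveal nothing about the bijection $\sigma$ and hence get $(d-\EN_v(Q,C,A,F))/(d-\#C_e)$ without that detour; you also make explicit the injectivity point (distinct sources in $A'$ yield distinct images, so no double-counting) that the paper leaves implicit in its first equality.

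One small point to tighten: from ``$A'$ contains all tasks on $\ar{uv}$'' you infer that $A$ contains no tasks on the undirected edge $e$ \emph{whatsoever}, but the hypothesis does not rule out tasks on the reverse arc $\ar{vu}$ lying in $A$. This does not damage your argument---such tasks head to $u$, not $v$, so they do not influence $X$, and your independence of $\sigma$ from $X$ only requires that no task in $A$ \emph{heading to $v$} uses $e$---but the justification should be phrased accordingly. Likewise, the assertion that each $s$ in $A'$ satisfies $s\notin\pi_u(C_e)$ is really an (entirely reasonable) standing assumption on admissible tasks rather than a consequence of the stated hypothesis; the paper's proof relies on it tacitly as well.
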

\begin{proof}
Let $A'=\{ t_1, \ldots , t_k \}$ and consider the events
\[
E_i = \text{ ``} t_i \text{ finds a solution unknown after completing all tasks in } A \text{.''} 
\]
Then
\begin{align*}
\EE_v (Q,C,A\cup A', F) &= \EE_v (Q,C,A,F) + \displaystyle\sum_{i=1}^k \Pr (E_i)\\
&= \EE_v (Q,C,A,F)+ \alpha \, \# A' \, \Pr (\sol_{t_1} \notin \sol_A) \\
&= \EE_v (Q,C,A,F) + \alpha \, \# A' \,\times \\
&\displaystyle\frac{d- \EE_v (Q,C,A,F) - \Big( \frac{\# F_{\edge (t)} \cdot (d-\EE_v (Q,C,A,F))}{d-\#C_e}\Big)}{d-\#C_e-\#F_{\edge (t)}}  \\
&= \EE_v (Q,C,A,F) + \alpha \, \# A' \, \times \\
& \left(\displaystyle\frac{d-\EE_v (Q,C,A,F)}{d-\#C_e} \right).
\end{align*}
\end{proof}


\section{Algorithm}\label{alg}
For every edge $\ar{e}$ we have its \defn{potential} $\dEN_{\ar e}(x)$ at state $x=(Q,C,A,F)$. The potential guides edge selection in our main algorithm below.
Following the study in \S\ref{sec:task-selection}, the natural greedy potential aiming to maximize the expected \emph{total} number of discovered solutions is 
$$
\dEN_{\ar e}^{\EN}(x) = \EN(Q,C, A\cup \{(s,\ar e)\}, F) - \EN(x).
$$

\begin{algorithm}[Main algorithm]\label{alg:main} The following is executed on all available threads after initializing the state $x=(Q,C,A,F)$.
\noindent
\begin{algorithmic}[0]
\smallskip\hrule\smallskip
\WHILE{$\nexists Q_v$ such that $|Q_v|=d$}
\STATE Pick an edge $\ar e = (w,v)$ that maximizes $\dEN_{\ar e}(x)$ and such that there is $s\in Q_{w} \setminus \pi_wC_{e}$.
\STATE $t \gets (s,\ar e)$
\STATE Update the state: $x\gets(Q,C,A\cup\{t\},F)$.  
\STATE Update $\dEN(x)$.
\STATE Run homotopy continuation for task $t$.
\IF{the run fails}
\STATE $F_{\edge(t)} \gets F_{\edge(t)} \cup \{s\}$.  
\ELSE 
\STATE Update $Q_v$, $C_e$, and $\dEN$. \COMMENT{Note that an update is needed only for $\dEN_{\ar{e'}}$ such that $\head(\ar{e'}) = v$.} 
\ENDIF
\ENDWHILE
\smallskip\hrule\smallskip
\end{algorithmic}
\end{algorithm}

Here are other (heuristic) potential functions we considered:
: 
\begin{eqnarray*}
\dEN_{\ar e}^{\text{ord}}(x) &= &1/i, \\
&&\text{where } v_i = \head{\ar e} \quad \text{(assuming $i\in\{1,\dots,N\}$)}\\   
\dEN_{\ar e}^\omega(x) &=  &\sum_{v\in V}\omega(v)\, \left(\EN_v(Q,C, A\cup \{(s,\ar e)\},F) - \EN_v (x)\right), \\
&&\text{where weight }\omega(v)\in [0,1].
\end{eqnarray*}

Note $\dEN^{\text{ord}}$ is designed to bias edge selection towards nodes in their order of appearance. This potential is likely to force the algorithm to complete the solution set of the first node.

The weighted potential $\dEN^\omega$ depends on the design of the weight function. See \S\ref{sec:experiment-potentials} for a family of weight functions that seems to be useful in practice. 

In the initial idle state, $\dEN$ can be computed and stored and then updated during the computation. 
Propositions~\ref{prop2} and~\ref{prop_fail_pot} allow for an efficient way to do that in both sequential and parallel setting, with or without an assumption of failures. 


\section{Failure rate and threshold}\label{sec:threshold}
With assumptions~\ref{randomAssumption} and~\ref{failureAssumption}, suppose we
have a complete multigraph on $N$ nodes with $m$ edges connecting each pair of
nodes, $d$ solutions at each node, and tracking success probability $\alpha $.
To each homotopy graph, we associate a \emph{solution (multi)graph} whose
vertices are given by all of the solutions at each node and whose edges are the
successful correspondences between solutions. One possible instance of this
random solution graph is depicted in Figure~\ref{fig:multigraph}. 
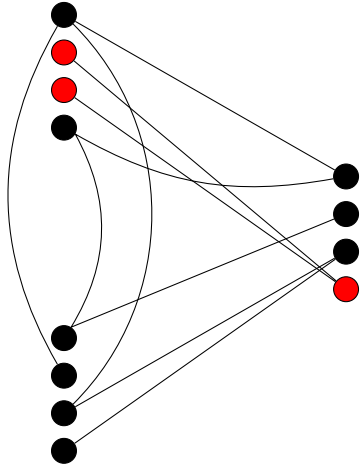
\begin{figure}
  \centering
  \begin{tikzpicture}[every node/.style={draw,shape=circle,fill=black},scale=1/4]
\node (11) at (10,0) {};
\node[fill=red] (21) at (-5,8.6) {};
\node (31) at (-5,-8.6) {};
\node (12) at (10,2) {};
\node[fill=red] (22) at (-5,8.6+2) {};
\node (32) at (-5,-8.6+2) {};
\node (13) at (10,4) {};
\node (23) at (-5,8.6+4) {};
\node (33) at (-5,-8.6+4) {};
\node[fill=red] (14) at (10,-2) {};
\node (24) at (-5,8.6-2) {};
\node (34) at (-5,-8.6-2) {};
\draw (23) to [bend right =30] (32);
\draw (13) to [bend left= 20] (24) to [bend left =30] (33) to (12);
\draw (11) -- (31) to [bend right = 45] (23);
\draw (11) -- (34);
\draw (13) -- (23);
\draw (22) -- (14) -- (21);
\end{tikzpicture}
\caption{Solution multigraph w/ $N=3, d=4,$ $m=2.$}
\label{fig:multigraph}
\end{figure}

Note that the graph in Figure~\ref{fig:multigraph} has only $10$ edges out of a possible $24.$ Nevertheless, our algorithm succeeds in completing the bottom node whenever we start from one of the $9$ black solutions, which form a large connected component. We see that connectivity of the solution graph is sufficient, but not necessary, for our algorithm to terminate.

In our random solution graph model, define $A:= A_{m,N,d}$ to be the event that the algorithm starting at a random node terminates with $d$ solutions. We are interested in the asymptotic behavior of $\Pr (A)$ as $d\to \infty ,$ with reasonable assumptions on $m$ and $N.$ More precisely, we wish to describe an interval $[a_{m,N,d},b_{m,N,d}]$ containing a \emph{threshold} for the event $A;$ this means that for $\alpha (m,N,d) = o(a_{m,N,d}),$ we have $\Pr (A_{m,N,d})\to 0,$ while $\Pr (A_{m,N,d} )\to 1$ if $\alpha (m,N,D) = \omega (b_{m,N,d}).$

The characterization of thresholds for various properties is a well-studied problem in random graph theory, particularly in the context of the Erd\"{o}s-Renyi graph model. Our random solution graph does not enjoy the same asymptotic properties as the Ed\"{o}s-Renyi graph---since no two solutions at the same node may be connected, the graph is sparse, even for $\alpha $ near $1.$ Minding these difficulties, we provide a simple threshold region for the event $A$ in Proposition~\ref{prop:threshold_bounds}---see subsection~\ref{subsec:parth_failures} for experimental verification and further discussion.
\begin{proposition}
\label{prop:threshold_bounds}
With $m, N$ possibly depending on $d,$ we have the following large-$d$ asymptotics:
\begin{itemize}
\item[i)] If $\alpha (d) = o \left( (N m)^{-1}\right)$ and $N(d) = o\Big(\exp(d)\Big),$ then \\
$\lim_{d\to \infty } \Pr (A_{m,N,d}) =0.$
\item[ii)] If $\alpha (d) = \omega \left(\log d / m \right)$ and $N(d)= O(\log d),$ then \\
$\lim_{d\to \infty } \Pr (A_{m,N,d}) =1.$
\end{itemize}
\end{proposition}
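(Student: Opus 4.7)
The proof splits cleanly into a straightforward first-moment argument for (i) and a bipartite-connectivity argument for (ii).

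For (i), write $C(s_0)\subseteq V(G^*)$ for the connected component of the starting solution $s_0$ in the solution graph $G^*$. Since the algorithm only discovers solutions reachable from $s_0$, the event $A_{m,N,d}$ requires $C(s_0)\supseteq S(v)$ for some node $v$, forcing $|C(s_0)|\geq d$. My plan is to stochastically dominate $|C(s_0)|$ by the total progeny of a Galton--Watson process with offspring distribution $\Bin((N-1)m,\alpha)$: every solution vertex has at most $(N-1)m$ potential neighbors (one per parallel homotopy edge to each of the other nodes), and by Assumption~\ref{failureAssumption} each such arrow is retained independently with probability $\alpha$. The mean offspring equals $(N-1)m\alpha = o(1)$ by hypothesis, so the branching process is strictly subcritical with expected total progeny $1/(1-(N-1)m\alpha) = 1+o(1)$. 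Markov's inequality then yields $\Pr(|C(s_0)|\geq d) = O(1/d) \to 0$.

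For (ii), the plan is to prove the stronger statement that $G^*$ is connected with high probability, which forces $C(s_0)=V(G^*)\supseteq S(v)$ for every $v$. Since $G^*$ is the edge-disjoint union over $\{u,v\}\in\binom{V}{2}$ of bipartite subgraphs $B_{uv}$ on $S(u)\cup S(v)$, simultaneous bipartite connectivity of all $B_{uv}$ already implies connectivity of $G^*$, and it suffices to show that the probability of a given $B_{uv}$ being disconnected is small enough to survive a union bound over the $\binom{N}{2}=O(\log^2 d)$ pairs. For a fixed bipartite cut $(S_u,S_v)$ with $|S_u|=a$, $|S_v|=b$, each of the $m$ independent parallel edges contributes a bijection $\sigma$, and the cut is preserved by $\sigma$ precisely when all $a+b-2c$ ``crossing'' arrows fail, where $c=|\sigma(S_u)\cap S_v|$ is hypergeometric with mean $ab/d$; this gives a per-edge factor $(1-\alpha)^{a+b-2c}$. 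Taking the $m$-fold product, multiplying by the entropy $\binom{d}{a}\binom{d}{b}\leq e^{O(a+b)}$, and summing over $(a,b)$ yields $\Pr(B_{uv}\text{ disconnected}) = d^{-\omega(1)}$ under $m\alpha=\omega(\log d)$, which absorbs the union-bound cost easily.

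The main technical obstacle lies in the cut-counting step: arrows within a single parallel edge are correlated through the bijection (they form a matching rather than independent Bernoullis), so the hypergeometric variable $c$ must be handled carefully. The cleanest route is to split cuts into the \emph{isolated-vertex regime} $\min(a,b)\leq 1$, where the number of crossings is essentially $d$ and a direct bound $(1-\alpha)^{md}$ dwarfs any polynomial factor, and the \emph{bulk regime} $a,b\geq 2$, where $c$ concentrates near $ab/d$ by Chernoff and the exponent is $\Omega(m\min(a,b))$, dominating the $e^{O(a+b)}$ entropy term whenever $m\alpha\gg\log d$. All other pieces reduce to routine first-moment or branching-process estimates.
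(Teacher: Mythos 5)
Your strategy is sound and genuinely different from the paper's on both halves. For part (i) the paper bounds $\Pr(A)$ above by the probability of the event $S$ that some node has every solution incident to at least one successful correspondence, and controls $\Pr(S)$ via the Harris/Kleitman inequality; your subcritical Galton--Watson domination of $|C(s_0)|$ instead uses the sharper necessary condition $|C(s_0)|\ge d$ and closes with Markov, without ever invoking the hypothesis $N=o(\exp(d))$ -- so it is both cleaner and slightly stronger. For part (ii) both arguments pass through $\Pr(C)\le\Pr(A)$, but the paper lower-bounds $\Pr(C)$ by the sequential product $\prod_i\Pr(\text{all solutions at }v_i\text{ have a neighbour in }G_{i-1})$ glued together with Harris/Kleitman. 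That product really only controls a minimum-degree-type event (and its base case $C_1$ is vacuous, since no two solutions at a single node are adjacent); minimum degree $\ge 1$ does not imply connectivity here -- a union of $m$ perfect matchings can have minimum degree $m$ and be disconnected. Your cut-counting argument inside each bipartite piece $B_{uv}$, followed by a union bound over the $\binom{N}{2}=O(\log^2 d)$ pairs, is the standard way to actually establish connectivity and is the more defensible route; what it costs is the hypergeometric bookkeeping that the paper's shortcut avoids.

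Two quantitative slips in your sketch of (ii) need repair, though neither is fatal. First, the entropy of cuts is $\binom{d}{a}\binom{d}{b}\le e^{O((a+b)\log d)}$, not $e^{O(a+b)}$; this extra $\log d$ is exactly why the hypothesis is $m\alpha=\omega(\log d)$ rather than $m\alpha=\omega(1)$, and your final step does invoke $m\alpha\gg\log d$, but as written the exponents do not match. Second, in the regime $\min(a,b)\le 1$ the number of crossing arrows per matching is about $\max(a,b)$ (for a single isolated solution it is exactly one per matching, giving $(1-\alpha)^m=d^{-\omega(1)}$), not $d$, so $(1-\alpha)^{md}$ is an overstatement; the correct $(1-\alpha)^{m\max(a,b)}$ still beats the $d^{\max(a,b)}$ entropy. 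The one piece that genuinely remains to be executed is the bound on $\EE\bigl[(1-\alpha)^{a+b-2c}\bigr]$: the dangerous contribution is not the Chernoff bulk but the atom near $c=\min(a,b)$ (e.g.\ $\sigma(S_u)=S_v$, which kills all crossings for that matching with probability about $\binom{d}{a}^{-1}$). After the $m$-fold product this tail contributes roughly $\binom{d}{a}^{-m}$ against entropy $\binom{d}{a}\binom{d}{b}$, which is harmless once $m\to\infty$ (forced by $m\alpha=\omega(\log d)$ and $\alpha\le 1$), but this estimate is the heart of part (ii) and should be written out rather than deferred.
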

We require a simple fact known as the Harris/Kleitman inequality, specialized to our model (cf.~\cite{alon2004probabilistic} pp.~86-87, ~\cite{bollobas2006percolation} pp.~39-41):
\begin{theorem}[Harris/Kleitman Inequality]
\label{harris}
If $\mathcal{A}$ and $\mathcal{B}$ are events in the random solution graph model which are upward-closed with respect to inclusion,
\[
  \label{eq:crude_bounds}
\Pr (\mathcal{A} \cap \mathcal{B}) \ge \Pr (\mathcal{A} ) \times \Pr (\mathcal{B}).
\]
\end{theorem}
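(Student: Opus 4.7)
The plan is to reduce the statement to the classical Harris inequality for a product Bernoulli measure, by first exhibiting the random solution graph as a monotone function of independent coordinates, and then applying the standard inductive argument. For each edge $e$ of the homotopy multigraph draw a uniform correspondence $\sigma_e \in S_d$, and for each directed atomic task $t$ draw an independent Bernoulli($\alpha$) success indicator $X_t$. Conditioning on the tuple $\Sigma = (\sigma_e)_e$, the edge set of the solution graph is a coordinatewise non-decreasing function of the indicators $(X_t)_t$, so the indicator functions of $\mathcal{A}$ and $\mathcal{B}$ are non-decreasing in $(X_t)$ for every realization of $\Sigma$.

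Next, I would prove the classical Harris inequality for a product Bernoulli measure on $\{0,1\}^n$ by induction on $n$. The base case $n=1$ is the identity
\[
\EE[fg] - \EE[f]\,\EE[g] = \Pr(X=0)\Pr(X=1)\bigl(f(1)-f(0)\bigr)\bigl(g(1)-g(0)\bigr) \ge 0
\]
for monotone $f, g \colon \{0,1\} \to \RR$. For the inductive step, condition on the final coordinate $X_n$: the inner conditional covariance is non-negative by the hypothesis, while the outer covariance of the conditional expectations $x \mapsto \EE[f \mid X_n = x]$ and $x \mapsto \EE[g \mid X_n = x]$—both still monotone in $x$—is handled by the base case. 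Applied in our setting, this yields the almost-sure conditional inequality $\Pr(\mathcal{A} \cap \mathcal{B} \mid \Sigma) \ge \Pr(\mathcal{A} \mid \Sigma)\,\Pr(\mathcal{B} \mid \Sigma)$.

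The hard part is the outer integration over $\Sigma$. Taking expectations of the conditional inequality gives $\Pr(\mathcal{A} \cap \mathcal{B}) \ge \EE_\Sigma\bigl[\Pr(\mathcal{A} \mid \Sigma)\,\Pr(\mathcal{B} \mid \Sigma)\bigr]$, and I still need to replace the right-hand side by $\Pr(\mathcal{A})\,\Pr(\mathcal{B})$. The symmetric groups $S_d$ carry no obvious total order for which the functions $\Sigma \mapsto \Pr(\mathcal{A} \mid \Sigma)$ are monotone, so a direct second pass of the inductive Bernoulli argument does not apply. My plan is to bypass this by recasting the full model in the framework of the Ahlswede--Daykin four-functions theorem on the distributive lattice of edge subsets of the ambient complete multigraph on solution-vertices: this inequality subsumes FKG, accommodates the non-product structure induced by the matching constraints, and closes the outer step in one stroke, giving the conclusion along the lines of the treatments cited from Alon--Spencer and Bollob\'as--Riordan.
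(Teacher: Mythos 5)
Your first two steps are fine and are essentially the standard argument the paper is implicitly invoking (note the paper gives no proof of Theorem~\ref{harris} at all: it cites Alon--Spencer and Bollob\'as--Riordan for the Harris inequality ``specialized to our model''). Fixing the correspondences $\Sigma=(\sigma_e)_e$, the solution graph is an independent thinning of the matched pairs, increasing events become non-decreasing functions of the Bernoulli indicators, and the induction on $\{0,1\}^n$ gives the conditional inequality $\Pr(\mathcal{A}\cap\mathcal{B}\mid\Sigma)\ge\Pr(\mathcal{A}\mid\Sigma)\,\Pr(\mathcal{B}\mid\Sigma)$ almost surely. You are also right that the outer integration over $\Sigma$ is the genuine difficulty.

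Your proposed fix, however, cannot work, because the unconditional statement for \emph{arbitrary} upward-closed events is false in this model. Take $N=2$, $m=1$, $d=2$, with solutions $u_1,u_2$ at one node and $v_1,v_2$ at the other: the events $\mathcal{A}=\{u_1v_1\text{ is an edge of the solution graph}\}$ and $\mathcal{B}=\{u_1v_2\text{ is an edge}\}$ are upward-closed in the edge set and each has positive probability, yet they are mutually exclusive since the single correspondence matches $u_1$ to exactly one of $v_1,v_2$; hence $\Pr(\mathcal{A}\cap\mathcal{B})=0<\Pr(\mathcal{A})\Pr(\mathcal{B})$. The same example shows the law of the edge set is not log-supermodular on the lattice of edge subsets ($\mu(S\cup T)\,\mu(S\cap T)=0<\mu(S)\,\mu(T)$ for $S=\{u_1v_1\}$, $T=\{u_1v_2\}$), so the hypotheses of FKG/Ahlswede--Daykin fail --- random matchings are negatively, not positively, associated, and no four-functions argument will restore the inequality. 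The statement has to be understood (and is only ever used in the proof of Proposition~\ref{prop:threshold_bounds}) in the conditional form you already proved: in both applications there --- the events ``node $j$ fails'' and ``every solution at $v_i$ has a neighbour in $G_{i-1}$'' --- at least one of the two events in each product has conditional probability that does not depend on $\Sigma$, because distinct solutions at a node occupy disjoint, fixed-size sets of potential correspondences. Averaging your conditional Harris inequality over $\Sigma$ therefore already yields exactly the inequalities the paper needs; no second FKG-type pass over the correspondences is required, and none is available.
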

In random graph theory, a property which is upward-closed with respect to inclusion is called a \emph{monotone increasing property.} For us, monotone increasing simply means that increasing $\alpha $ does not decrease $\Pr (\mathcal{A})$ or $\Pr (\mathcal{B}).$ It is a famous result that every monotone property in the Erd\"{o}s-Renyi model has a sharp threshold---for a precise statement, see \cite{Bol1987}.
\begin{proof}[Proof of Proposition~\ref{prop:threshold_bounds}]
Consider the following auxiliary events:
\begin{itemize}
\item $S:= S_{m,N,d,\alpha }$ will denote the event that there exists some node with a successful correspondence at each solution
\item $C := C_{m,N,d,\alpha}$ will denote the event that the solution graph is connected
\end{itemize}
Clearly we have
\begin{equation}
  \label{crude_bounds}
\Pr (C) \le \Pr (A) \le \Pr (S)
\end{equation}
For part i), we may assume WLOG that $\alpha (d) >0$ for $d$ sufficiently large. Now, simply note that
\begin{align*}
\Pr (S) &= 1 - \Pr (\text{ all nodes fail } ) \\
&\le 1 - \displaystyle\prod_{j=1}^N \Pr (\text{node } j \text{ fails}) \tag{Theorem~\ref{harris}} \\
&\le 1 - \Big(1- (\alpha N m)^d\Big)^N .
\end{align*}
For $N=O(1)$ as $d\to \infty ,$ we have
\[
(1 - (\alpha N m)^d)^N= (1 - o(1)^d)^N \to 1 \text{ as } d\to \infty . 
\]
For the regime $\omega (1) = N(d) = o(\exp (d)) ,$ we have
\[
(1 - (\alpha N m)^d)^N \sim \exp (-(\alpha  N^{1+1/d} m)^d ), 
\]
which is $\omega (1)$ for $\alpha (d) = o(N^{-(1+1/d)} \, m^{-1} ) = o ((Nm)^{-1}).$
In either regime, we have $\Pr (S)\to 0$ as $d\to \infty .$\\\\
To bound $\Pr (A)$ from below, let $v_1, \ldots , v_n$ be the nodes of the homotopy graph and $G_i$ denote the subgraph of the solution graph induced by the solutions at nodes $v_1$ through $v_i.$ By repeated application of Theorem~\ref{harris}, we have
\begin{align*}
\Pr (C_N) &= \Pr ( C_{N-1} \cap \textrm{ each sol at } v_N \textrm{ has a nbr in } G_{N-1})\\
&\ge \Pr ( C_{N-1}) \times \Pr (\textrm{each sol at } v_N \textrm{ has a nbr in } G_{N-1})\\
&\ge \displaystyle\prod_{i=1}^N \Pr(\textrm{ all solutions at } v_i \textrm{ have a nbr in } G_{i-1} )\\
&= \displaystyle\prod_{i=1}^N \Big( 1 - (1-\alpha )^{(N-i) e } \Big)^d \\
&\ge \Big( 1 - \exp (-\alpha m )\Big)^{Nd} \\
\end{align*}
Now, setting 
\[
\alpha (d) = \displaystyle\frac{\log (d) \times \bigg( 1 + \log_d \big(N + g(d)\big)\bigg)}{m},
\]
with $g(d)$ \emph{any} function such that $g(d)\to \infty $ as $d\to \infty ,$ we have
\[
\Pr (C) \ge (1 - g(d)/(Nd))^{Nd} \sim \exp \big( -g(d) \big) \to 1.
\]
\end{proof}

\section{Experimental Results}\label{experiments}
Our simulator enables the study of two types of experiments:
\begin{itemize}
\item
Experiments analyzing fake solution graph data generated according to Assumptions~\ref{randomAssumption} and~\ref{failureAssumption}, and 
\item 
experiments based on real parametric systems, for which all data --- actual solutions, actual correspondences for edges in the graph, actual timings for each homotopy path that may be tracked --- is harvested \emph{before} the experimentation begins.
\end{itemize}

The simulator (code available at \url{https://github.com/sommars/parallel-monodromy}) proceeds in two stages:
\begin{itemize}
\item
The first stage takes either randomly generated data using Assumption~\ref{randomAssumption},
which does not require running homotopy continuation, or collects the data through
tracking homotopy paths with existing software. 
\item 
The second uses the datafile produced by the first. 
If several threads are simulated then we assume that there is no communication overhead, which is a close approximation of reality. Indeed, the messages passed around are rather short: a longest one contains coordinates of a newly discovered solution. This cost can be ignored in comparison to the cost of a homotopy continuation task.
\end{itemize}

From observed runs of \verb|PHCpack|~\cite{V99} and 
\verb|NumericalAlgebraicGeometry|~\cite{Leykin:NAG4M2}, we chose to model the time taken by each fake path track on the negative binomial distribution with parameters  $p = 0.3,$ $n = 10$.
For clarity and consistency with the results of \S\ref{sec:threshold}, all simulations have been run using the complete graph configuration
described in~\cite{firstmonodromypaper}.

\subsection{Parallel Performance}

To demonstrate the quality of a parallel algorithm, the typical
metrics used are \textit{speedup} and \textit{efficiency} (for textbook references,
see~\cite{parallel-book2},~\cite{parallel-book1}). For a number
of processors $p$, speedup is defined
to be
\begin{equation}
S(p) = \frac{\text{sequential execution time}}{\text{parallel execution time}}
\end{equation}
while efficiency is defined as
\begin{equation}
E(p) = \frac{S(p)}{p}\times 100\%
\end{equation}
Ideally one would obtain $S(p) = p$ and $E(p) = 100\%$, which means that
all processor resources are constantly in use and no extra work is performed,
compared to running the program with a single processor.

We ran two experiments to observe the efficiency of our algorithm, one with
simulated data as in (1) and one with observed
data as in (2). Table~\ref{FakeParallelEfficiency} contains efficiency results
for the simulated data experiment, while Table~\ref{CyclicParallelEfficiency}
has efficiency results for the cyclic-$n$ roots problem.

\begin{table}[h]
\centering
\begin{tabular}{l||c|c|c|c|c|}
\#Solutions & 100 & 500 & 1000 & 5000 & 10000\\ \hline
1 & 100\% & 100\% & 100\% & 100\% & 100\%\\
2 & 98.87\% & 98.36\% & 99.88\% & 98.61\% & 99.3\%\\
4 & 96.71\% & 96.34\% & 98.28\% & 99.75\% & 100.45\%\\
8 & 91.92\% & 95.04\% & 97.55\% & 98.7\% & 100.56\%\\
16 & 84.65\% & 92.82\% & 98.68\% & 99.24\% & 99.82\%\\
32 & 71.39\% & 87.12\% & 94.89\% & 97.8\% & 100.74\%\\
64 & 55.04\% & 78.78\% & 89.45\% & 96.7\% & 99.07\%\\
128 & 35.95\% & 65.82\% & 79.62\% & 93.68\% & 97.87\%\\
\end{tabular}
\caption{Efficiency for simulated polynomial systems with varied numbers of solutions.}
\label{FakeParallelEfficiency}
\end{table}

\begin{table}[h]
\centering
\begin{tabular}{l||c|c|c|c|c|c|}
$n$ & 5 & 6 & 7 & 8 & 9 & 10\\ \hline
1 & 100\% & 100\% & 100\% & 100\% & 100\% & 100\%\\
2 & 110.48\% & 98.34\% & 104.3\% & 99.41\% & 99.44\% & 109.02\%\\
4 & 107.7\% & 98.57\% & 110.79\% & 103.06\% & 99.81\% & 107.62\%\\
8 & 101.53\% & 98.23\% & 108.02\% & 108.59\% & 101.02\% & 106.58\%\\
16 & 94.88\% & 91.52\% & 103.53\% & 100.53\% & 101.79\% & 103.91\%\\
32 & 76.23\% & 86.73\% & 97.72\% & 100.81\% & 101.92\% & 105.54\%\\
64 & 54.59\% & 70.47\% & 93.45\% & 98.62\% & 99.92\% & 102.88\%\\
128 & 34.38\% & 52.37\% & 84\% & 96.23\% & 97.81\% & 102\%\\
\end{tabular}
\caption{Efficiency for cyclic-$n$ polynomial systems.}
\label{CyclicParallelEfficiency}
\end{table}

\noindent The cyclic $n$-roots problem is a classic
benchmark problem in polynomial system solving, commonly formulated as
\begin{equation} \label{eqcyclicsys}
   \begin{cases}
   x_{0}+x_{1}+ \cdots +x_{n-1}=0 \\
   i = 2, 3, \ldots, n-1: 
    \displaystyle\sum_{j=0}^{n-1} ~ \prod_{k=j}^{j+i-1}
    x_{k~{\textrm mod}~n}=0 \\
   x_{0}x_{1}x_{2} \cdots x_{n-1} - 1 = 0. \\
\end{cases}
\end{equation}

\noindent Both Tables~\ref{FakeParallelEfficiency} and~\ref{CyclicParallelEfficiency}
show the same relationships: as the number of threads increases, efficiency
slowly decreases, and as the size of the problem increases,
efficiency improves. This shows that it is an effective algorithm
for running large systems in parallel, though it is unfortunate that for huge
numbers of threads that efficiency decreases.

One could be concerned that Algorithm~\ref{alg:main} would be slow to start,
because initially a single node has a single solution. For small homotopy graphs
with large numbers of threads, some threads will by necessity be idle until
there are sufficiently many tasks available. Define
\begin{equation}
\%\text{Idle} = \frac{\sum_{i=1}^p\text{Idle time}}{\text{Wall time}\times p}.
\end{equation}

\noindent As the number of solutions increases, $\%\text{Idle}$ approaches
zero. It would be possible to make $\%\text{Idle} = 0$
through a modification to Algorithm~\ref{alg:main}.
When a thread rests idle waiting for a task to become available,
it could define its own edge by picking a random $\gamma$
and tracking the sole known solution to a different node. In doing this,
it has the potential to discover new solutions, but without adding
to the known set of correspondences. Each thread could do this until it
can be assigned a path track as the algorithm prescribes. However, this
will provide only a minimal benefit, because the amount of idle time according to
a wall clock is low.

\subsection{Path Failures}
\label{subsec:parth_failures}

The ``probability-one'' homotopy in a linear family $F_p$ fails with some nonzero probability. At fixed precision, this probability becomes non-negligible, say, as the degree of the discriminant rises. In practice, the reliability of homotopy continuation may be impacted by more \emph{aggressive path-tracking.} For instance, raising the minimum step-size lowers the number of predictor steps, but there is a risk that errors accumulated may too large to finish continuation. In MS, this risk is spread across its incoming edges. Thus, we are interested in balancing tradeoffs between task reliability and speed.

Assumption~\ref{failureAssumption} gives a simple model for path failures in a practical setting. An important feature of this model is that our simulator assumes a ``true correspondence'' between the solutions of two connected nodes before declaring that some of these paths fail. Thus, our model of failures ignores the phenomenon of path-jumping (potentially resulting in a 2-1 correspondence between approximate solutions,) or the possibility that some node has a near-singular solution.  A logical next step would be to incorporate these possibilities into our model. However, we find that the simple model already sheds some light on the tradeoff previously described.

\begin{figure}
  \centering
  \includegraphics[width=8cm,height=7cm]{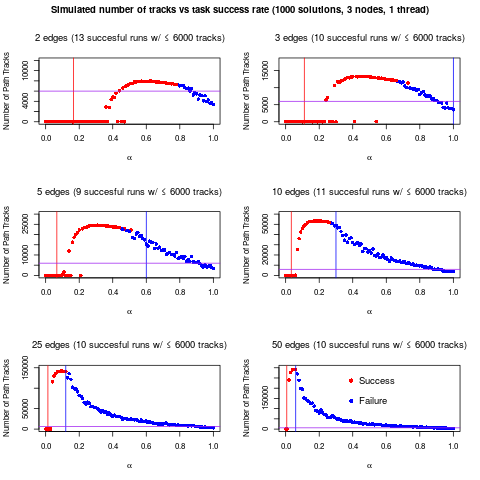}
  \caption{Total number of tracks vs task success rate $\alpha $ for varying edge multiplicities on a 3-node graph with 1000 solutions. The red and blue vertical lines given by $x=1/3m$ and $x=\log_{10} (d) / m,$ respectively, give an approximate window for the failure threshold. To give a sense of scale, a purple horizontal line at $y=6000$ has been added.}
    \label{fig:sim_runtime_vs_edges}
\end{figure}

The plots in figure~\ref{fig:sim_runtime_vs_edges} supplement the results of Section~\ref{sec:threshold}. In each panel, the vertical distance equals the theoretical maximum number of tracks for each graph layout. Each run was performed with a single fake thread using the potential {\tt potE}. These plots illustrate a major strength of using potentials in the presence of failures---even when additional edges are added, the number of path tracks at a \emph{fixed} failure rate is stable (eg.~at most 6000 for $\alpha \ge 0.9.$)

The bounds in Proposition~\ref{prop:threshold_bounds} do not provide a useful upper bound on the threshold of global failure when the number of edges is relatively small (as in the top two plots of Figure~\ref{fig:sim_runtime_vs_edges}.) We attempted to determine tighter threshold regions experimentally---see Table~\ref{table:alpha-threshold-simulated} for fabricated data and Table~\ref{table:alpha-threshold-cyclic} for the cyclic $n$-roots problem.

\begin{table}[h]
\centering
\begin{tabular}{l||c|c|c|c|c|c|}$d\setminus N$&
4&
5&
6&
7&
8&9\\ \hline
16&
.716&
.544&
.426&
.36&
.332&
.271\\
32&
.756&
.599&
.495&
.427&
.362&
.312\\
64&
.771&
.62&
.537&
.47&
.391&
.366\\
128&
.799&
.666&
.584&
.498&
.453&
.405\\
256&
.841&
.732&
.634&
.572&
.497&
.445\\
512&
.873&
.752&
.674&
.598&
.536&
.49
\end{tabular}
\caption{
An approximate threshold for the success rate \ensuremath{\alpha}.
\ensuremath{N} = the number of nodes in the complete graph (with \ensuremath{m=1}), 
\ensuremath{d} = the number of solutions.    
}\label{table:alpha-threshold-simulated}
\end{table}

\begin{table}[h]
\centering
\begin{tabular}{l||c|c|c|c|c|c|}$n\setminus N$&
5&
6&
7&
8&9\\ \hline
5&
.546&
.492&
.34&
.298&
.281\\
6&
.605&
.516&
.416&
.344&
.316\\
7&
.686&
.611&
.531&
.452&
.453\\
8&
.734&
.688&
.647&
.564&
.492\\
9&
.818&
.733&
.672&
.629&
.556
\end{tabular}
\protect\caption{
An approximate threshold for the success rate \ensuremath{\alpha} for the cyclic-\ensuremath{n} family.
\ensuremath{N} = the number of nodes in the complete graph (with \ensuremath{m=1}).
}
\protect\label{table:alpha-threshold-cyclic}
\end{table}

\subsection{Potential functions and edge selection}\label{sec:experiment-potentials}

We defined potentials $\dEN^{\EN}$, $\dEN^{\text{ord}}$, and $\dEN^\omega$. The last potential offers a lot of freedom to the user of the method. 
For instance, we could combine the ideas behind $\dEN^{\EN}$ and $\dEN^{\text{ord}}$ in $\dEN^\omega$ by setting 
\begin{equation}\label{eq:lambda-weight}
\omega(v) = (|Q_v| / d)^\lambda,\quad \lambda\geq 0,
\end{equation}
where $d$ is the root count. (It could be replaced with the maximal number of solutions known at any node).
Note that if $\lambda=0$, one gets $\dEN^{\EN}$; for large $\lambda$ the effect is similar to that of $\dEN^{\text{ord}}$ except the nodes are likely to be ordered according to the number of known solutions at any point of the execution. 

In the sequential case, \cite{firstmonodromypaper} shows that edge selection guided by the greedy potential $\dEN^{\EN}$ outperforms several naive choices, among them the random edge selection strategy. According to our experiments this, as we expect, still holds for the parallel setting.

We conducted several experiments with the weight potential $\dEN^\omega$ on graphs with edge multiplicity $m=1$ for fabricated and cyclic problems of degree up to $10000$ with and without failures. The weights described in (\ref{eq:lambda-weight}) seem to deliver better (but not necessarily the best) performance as $\lambda\to\infty$. In other words, while a variant of the order potential $\dEN^{\text{ord}}$ may serve as a good heuristic, there is still some room for improvement for edge selection strategies guiding the MS algorithm.

\section{Conclusion}\label{sec:conclusion}
The benefits of the Monodromy Solver framework are demonstrated by an implementation in Macaulay2~\cite{www:MonodromySolver,M2www}, which outperforms all existing blackbox polynomial system solvers on certain classes of problems. This is reported in \S 6.4 of the first article devoted to the framework~\cite{firstmonodromypaper}. 

The present work addressed items 1 (failures) and 3 (parallelization) in the program outlined in \S 7 of~\cite{firstmonodromypaper}. The experiments conducted with the simulator that we built, albeit not very extensive, shed light on the phenomena arising with the introduction of failures and parallel computation. The results of the experiments and the simulator itself will help to hone the core of the technique as well as construct efficient heuristics for software implementation in the future.

\bibliography{art.bbl}

\end{document}